\newtheorem{theorem}{Theorem}
\newtheorem{lemma}{Lemma}
\theoremstyle{definition}
\newcommand{\mbf}[1]{\mathbf{#1}}
\newcommand{\ind}{\stackrel{\mathrm{ind}}{\sim}}
\newcommand{\blind}{1}
\begin{document}

\def\spacingset#1{\renewcommand{\baselinestretch}%
{#1}\small\normalsize} \spacingset{1}

%%%%%%%%%%%%%%%%%%%%%%%%%%%%%%%%%%%%%%%%%%%%%%%%%%%%%%%%%%%%%%%%%%%%%%%%%%%%%%

\if1\blind
{
  \title{\bf Bayesian Estimation Under Informative Sampling with Unattenuated Dependence}
  \author{Matthew R. Williams\hspace{.2cm}\\
    Substance Abuse and Mental Health Services Administration\\
    and \\
    Terrance D. Savitsky \\
    U.S. Bureau of Labor Statistics}
  \maketitle
} \fi

\if0\blind
{
  \bigskip
  \bigskip
  \bigskip
  \begin{center}
    {\LARGE\bf Bayesian Estimation Under Informative Sampling with Unattenuated Dependence}
\end{center}
  \medskip
} \fi

\bigskip
\begin{abstract}
An informative sampling design leads to unit inclusion probabilities that are correlated with the response variable of interest. However, multistage sampling designs may also induce higher order dependencies, which are typically ignored in the literature when establishing consistency of estimators for survey data under a condition requiring asymptotic independence among the unit inclusion probabilities. We refine and relax this condition of asymptotic independence or asymptotic factorization and demonstrate that consistency is still achieved in the presence of residual sampling dependence. A popular approach for conducting inference on a population based on a survey sample is the use of a pseudo-posterior, which uses sampling weights based on first order inclusion probabilities to exponentiate the likelihood. We show that the pseudo-posterior is consistent not only for survey designs which have asymptotic factorization, but also for designs with residual or unattenuated dependence. Using the complex sampling design of the National Survey on Drug Use and Health, we explore the impact of multistage designs and order based sampling. The use of the survey-weighted pseudo-posterior together with our relaxed requirements for the survey design establish a broad class of analysis models that can be applied to a wide variety of survey data sets.
\end{abstract}

\noindent%
{\it Keywords:}  Cluster sampling, Stratification, Survey sampling, Sampling weights, Markov Chain Monte Carlo.
\vfill

\newpage
\spacingset{1.45} % DON'T change the spacing!

\section{Introduction}
Bayesian formulations are increasingly popular for modeling hypothesized distributions with complicated dependence structures.
The primary interest of the data analyst is to perform inference about a finite population generated from an unknown model, $P_{0}$.
The observed data are often collected from a sample taken from that finite population under a complex sampling design distribution, $P_{\nu}$, resulting in probabilities of inclusion that are associated with the variable of interest. This association could result in an observed data set consisting of units that are not independent and identically distributed. This association induces a correlation between the response variable of interest and the inclusion probabilities.
Sampling designs that induce this correlation are termed, ``informative", and the balance of information in the sample is different from that in the population.
Failure to account for this dependence caused by the sampling design could bias estimation of parameters that index the joint distribution hypothesized to have generated the population \citep{holt:1980}. While emphasis is often placed on the first order inclusions probabilities (individual probabilities of selection), an ``informative'' design may also have other features such as clustering and stratification that use population information and impact higher order joint inclusion probabilities. The impact of these higher order terms are more subtle but we will demonstrate that they can also impact bias and consistency.

\citet{2015arXiv150707050S} proposed an automated approach that formulates a sampling-weighted pseudo-posterior density by exponentiating each likelihood contribution by a sampling weight constructed to be inversely proportional to its marginal inclusion probability, $\pi_{i} = P\left(\delta_{i} = 1\right)$, for units, $i = 1,\ldots,n$, where $n$ denotes the number of units in the observed sample.  The inclusion of unit, $i$, from the population, $U$, in the sample is indexed by $\delta_{i} \in \{0,1\}$.  They restrict the class of sampling designs to those where the pairwise dependencies among units attenuate to $0$ in the limit of the population size, $N$, (at order $N$) to guarantee posterior consistency of the pseudo-posterior distribution estimated on the sample data, at $P_{0}$ (in $L_{1}$). While some sampling designs will meet this criterion, many won't; for example, a two-stage clustered sampling design where the number of clusters increases with $N$, but the number of units in each cluster remain relatively fixed such that the dependence induced at the second stage of sampling never attenuates to $0$.  A common example are designs which select households as clusters. Despite the lack of theoretical results demonstrating consistency of estimators under this scenario, use of first order sampling weights performs well in practice. This work provides new theoretical conditions for when consistency can be achieved and provides some examples for when these conditions are violated and consistency may not be achieved.

\subsection{Motivating Example: The National Survey on Drug Use and Health}
Our motivating survey design is the National Survey on Drug Use and Health (NSDUH), sponsored by the Substance Abuse and Mental Health Services Administration (SAMHSA). NSDUH is the primary source for statistical information on illicit drug use, alcohol use, substance use disorders (SUDs), mental health issues, and their co-occurrence for the civilian, non institutionalized population of the United States.  The NSDUH employs a multistage state-based design
\citep{MRB:Sampling:2014},
with the earlier stages defined by geography within each state in order to select households (and group quarters) nested within these geographically-defined primary sampling units (PSUs). The sampling frame was stratified implicitly by sorting the first-stage sampling units by a core-based statistical area (CBSA) and socioeconomic status indicator and by the percentage of the population that is non-Hispanic and white. First stage units (census tracts) were then selected with probability proportionate to a composite size measure based on age groups. This selection was performed `systematically' along the sort order gradient. Second and third stage units (census block groups and census blocks) were sorted geographically and selected with probability proportionate to size (PPS) sequentially along the sort order. Fourth stage dwelling units (DU) were selected systematically with equal probability, selecting every $k^{th}$ DU after a random starting point. Within households, 0, 1 or 2 individuals were selected with unequal probabilities depending on age with youth (age 12- 17) and young adults (age 18-25) over-sampled.

This paper provides conditions for asymptotic consistency for designs like the NSDUH, which are characterized by:
\begin{itemize}
\item Cluster sampling, such as selecting only one unit per cluster, or selecting multiple individuals from a dwelling unit.
\item Population information used to sort sampling units along gradients.
\end{itemize}
Both features are common, in practice, and create pairwise sampling dependencies that do not attenuate even if the population grows.  The consistency of estimators under these sampling designs are not addressed in the literature. For example, we will examine the relationship between depression and smoking. 
%(ADD citationation relationship to age/ pop density - MHDT's). 
Cigarette use and depression vary by age, metropolitan vs. non-metropolitan status, education level, and other demographics \citep{DT:2014,MHDT:2014}.
Both smoking and depression have the potential to cluster geographically and within dwelling units, since these related demographics may cluster. Yet the current literature, such as in \citet{2015arXiv150707050S}, is silent on the issue of non-ignorable clustering that may be informative (i.e. related to the response of interest). The results presented in this work establish conditions for a wide variety of survey designs and provide a theoretical justification that this relationship can be estimated consistently even under a complex multistage design such as the NSDUH.

\subsection{Review of Methods to Account for Dependent Sampling}
For consistency results, assumptions of approximate or asymptotic independence of sample selection (or factorization of joint inclusion probabilities into a product of individual inclusion probabilities) are ubiquitous. For example, \citet{Isaki82} assume asymptotic factorization to demonstrate the consistency of the Horvitz-Thompson estimator and related regression estimators. More recently, \citet{zbMATH06017974} used a similar assumption to demonstrate consistency of survey-weighted regression trees and \citet{2015arXiv150707050S} used it to show consistency of a survey-weighted pseudo-posterior.

\citet[Ch.2]{chambers2003analysis} review the construction of a sample likelihood using a Bayes rule expression for the population likelihood defined on the sampled units,  $f_{s}(y) = f_{U}(y|I = 1)$ (similar to \citet{pkr:1998}).
They explicitly state the assumption that the ``sample inclusion for any particular population unit is independent of that for any other unit (and is determined by the outcome of a zero-one random variable, $I$)''.
The further assumption of independence of the population units stated in \citet{pkr:1998} means that weighting each likelihood contribution multiplied together in the sample is an approximation of the likelihood for the $N$ population units.

\citet{pkr:1998} maintains the assumption of unconditional independence of the population units, but defines two classes of sampling designs: (1) The first class is independent, with replacement sampling, so the sample inclusions are all independent.
(2) The second class is some selected with replacement designs that are asymptotically independent.
\citet{chambers2003analysis} discuss the pseudo-likelihood (and cite \citet{Kish74}, \citet{Binder83} and \citet{Godambe86})  for estimation via a weighted score function. They assume that the correlation between inclusion indicators has an \emph{expected value} of 0, where the expectation is with respect to the population generating distribution. We note that they do not assume this correlation to be exactly equal to 0. However this condition still appears to be more restrictive than that of asymptotic factorization in which deviations from factorization shrink to 0 at a rate inverse to the population size $N$: $\order{N^{-1}}$.

The assumptions above are relied on to show consistency. However in practice, approximate sampling independence is only assumed for the first stage or primary sampling units (PSUs), with dependence between secondary units within these clusters commonly assumed. This setup is the defacto approach for design-based variance estimation (for example, see \citet[Ch.3]{heeringa2010applied} and  \cite{Rao92}) and is used in all the major software packages for analyzing survey data. One goal of the current work is to reconile this discrepancy by extending the class of designs for which consistency results are available to cover designs seen in practice such as those for which design-based variance estimation strategies already exist.

We focus on extending the results of the survey-weighted pseudo-posterior method of \citet{2015arXiv150707050S} which provides for flexible modeling of a very wide class of population generating models. By refining and relaxing the conditions on factorization, we expand results to include many common sampling designs. These conditions for the sampling designs can be applied to generalize many of the other consistency results mentioned above. There are some population models of interest for which marginal inclusion probabilities may not be sufficient and pairwise inclusion probabilities and composite likelihoods can be used to achieve consistent results \citep{2016yi, 2017pair}. However, \citet{2017pair} demonstrate that both a very specific population model (for example conditional behavior of spouse-spouse pairs within households) and specific sample design (differential selection of pairs of individuals within a household related to outcome) are needed for marginal weights to lead to bias. In the usual setting of inference on a population of individuals (rather than on a population of joint relationships \emph{within} households), pairwise weights and marginal weights are numerically similar, converging to one another for moderate sample sizes. The theory presented in the current work also clarifies why both approaches lead to consistent results. Furthermore, the current work also applies when individual units are mutually exclusive; for example, only selecting one individual from a household to the exclusion of all others. Such designs are not covered by the composite likelihood with pairwise weights approach, which require non-zero joint inclusion probabilities.

The remainder of this work proceeds as follows: In section \ref{sec:pseudop} we briefly review the pseudo-posterior approach to account for informative sampling via the exponentiaion of the likelihood with sampling weights. Our main result, presented in section \ref{results}, provides the formal conditions controlling for sampling dependence. In section \ref{sec:sims}, we provide two simulations. We first demonstrate consistency for a multistage survey design analogous to the NSDUH. We next create a pathological design based on sorting. The design violates our assumptions for sampling dependence and estimates fail to converge. However, we show that this design will lead to consistency if embedded within stratified or clustered designs. Lastly, we revisit the NSDUH with a simple example (section \ref{sec:NSDUH}) and provide some conclusions (section \ref{sec:conc}).

\section{Pseudo-Posterior Estimator to Account for Informative Sampling}
\label{sec:pseudop}
We briefly review the pseudo-likelihood and associated pseudo-posterior as constructed in \citet{2015arXiv150707050S} and revisited by \citet{2017pair}.

Suppose there exists a Lebesgue measurable population-generating density, $\pi\left(y\vert\bm{\lambda}\right)$, indexed by parameters,
$\bm{\lambda} \in \Lambda$. Let $\delta_{i} \in \{0,1\}$ denote the sample inclusion indicator for units $i = 1,\ldots,N$ from the population.  The density for the observed sample is denoted by, $\pi\left(y_{o}\vert\bm{\lambda}\right) = \pi\left(y\vert \delta = 1,\bm{\lambda}\right)$, where ``$o$" indicates ``observed".

The plug-in estimator for posterior density under the analyst-specified model for $\bm{\lambda} \in \Lambda$ is
\begin{equation}
\hat{\pi}\left(\bm{\lambda}\vert \mathbf{y}_{o},\tilde{\mathbf{w}}\right) \propto \left[\mathop{\prod}_{i = 1}^{n}p\left(y_{o,i}\vert \bm{\lambda}\right)^{\tilde{w}_{i}}\right]\pi\left(\bm{\lambda}\right), \label{pseudolike}
\end{equation}
where
$\mathop{\prod}_{i=1}^{n}p\left(y_{o,i}\vert \bm{\lambda}\right)^{\tilde{w}_{i}}$ denotes the pseudo-likelihood for observed sample responses, $\mathbf{y}_{o}$. The joint prior density on model space assigned by the analyst is denoted by $\pi\left(\bm{\lambda}\right)$.  The sampling weights, $\{\tilde{w}_{i} \propto 1/\pi_{i}\}$, are inversely proportional to unit inclusion probabilities and normalized to sum to the sample size, $n$. Let $\hat{\pi}$ denote the noisy approximation to posterior distribution, $\pi$, based on the data, $\mathbf{y}_{o}$, and sampling weights, $\{\tilde{\mathbf{w}}\}$, confined to those units \emph{included} in the sample, $S$.

\section{Consistency of the Pseudo Posterior Estimator}
\label{results}
A sampling design is defined by placing a \emph{known} distribution on a vector of inclusion indicators, $\bm{\delta}_{\nu} = \left(\delta_{\nu 1},\ldots,\delta_{\nu N_{\nu}}\right)$, linked to the units comprising the population, $U_{\nu}$.  The sampling distribution is subsequently used to take an \emph{observed} random sample of size $n_{\nu} \leq N_{\nu}$.
Our conditions needed for the main result employ known marginal unit inclusion probabilities, $\pi_{\nu i} = \mbox{Pr}\{\delta_{\nu i} = 1\}$ for all $i \in U_{\nu}$ and the second order pairwise probabilities, $\pi_{\nu ij} = \mbox{Pr}\{\delta_{\nu i} = 1 \cap \delta_{\nu j} = 1\}$ for $i,j \in U_{\nu}$, which are obtained from the joint distribution over $\left(\delta_{\nu 1},\ldots,\delta_{\nu N_{\nu}}\right)$. We denote the sampling distribution by $P_{\nu}$.

Under informative sampling, the inclusion probabilities
are formulated to depend on the finite population data values, $\mathbf{X}_{N_{\nu}} = \left(\mbf{X}_{1},\ldots,\mbf{X}_{N_{\nu}}\right)$. Information from the population is used to determine size measures for unequal selection  $\pi_{\nu i}$ and used to establish clustering and stratification which determine joint inclusions probabilities $\pi_{\nu ij}$.
Since the balance of information is different between the population and a resulting sample, a posterior distribution for $\left(\mbf{X}_{1}\delta_{\nu 1},\ldots,\mbf{X}_{N_{\nu}}\delta_{\nu N_{\nu}}\right)$ that ignores the distribution for $\bm{\delta}_{\nu}$ will not lead to consistent estimation.

Our task is to perform inference about the population generating distribution, $P_{0}$, using the observed data taken under an informative sampling design.  We account for informative sampling by ``undoing" the sampling design with the weighted estimator,
\begin{equation}
p^{\pi}\left(\mbf{X}_{i}\delta_{\nu i}\right) := p\left(\mbf{X}_{i}\right)^{\delta_{\nu i}/\pi_{\nu i}},~i \in U_{\nu},
\end{equation}
which weights each density contribution, $p(\mbf{X}_{i})$, by the inverse of its marginal inclusion probability. This approximation for the population likelihood produces the associated pseudo-posterior,
\begin{equation}\label{inform_post}
\Pi^{\pi}\left(B\vert \mbf{X}_{1}\delta_{\nu 1},\ldots,\mbf{X}_{N_{\nu}}\delta_{\nu N_{\nu}}\right) = \frac{\mathop{\int}_{P \in B}\mathop{\prod}_{i=1}^{N_{\nu}}\frac{p^{\pi}}{p_{0}^{\pi}}(\mbf{X}_{i}\delta_{\nu i})d\Pi(P)}{\mathop{\int}_{P \in \mathcal{P}}\mathop{\prod}_{i=1}^{N_{\nu}}\frac{p^{\pi}}{p_{0}^{\pi}}(\mbf{X}_{i}\delta_{\nu i})d\Pi(P)},
\end{equation}
that we use to achieve our required conditions for the rate of contraction of the pseudo-posterior distribution on $P_{0}$.  We note that both $P$ and $\bm{\delta}_{\nu}$ are random variables defined on the space of measures ($\mathcal{P}$ and $ B \subseteq \mathcal{P}$) and the distribution, $P_{\nu}$, governing all possible samples, respectively.  An important condition on $P_{\nu}$ formulated in \citet{2015arXiv150707050S} that guarantees contraction of the pseudo-posterior on $P_{0}$ restricts pairwise inclusion dependencies to asymptotically attenuate to $0$.  This restriction narrows the class of sampling designs for which consistency of a pseudo-posterior based on marginal inclusion probabilities may be achieved.  We will replace their condition that requires marginal factorization of all pairwise inclusion probabilities with a less restrictive condition allowing for non-factorization for a small partition of pairwise inclusion probabilities. This expands the allowable class of sampling designs under which frequentist consistency may be guaranteed.  We assume measurability for the sets on which we compute prior, posterior and pseudo-posterior probabilities on the joint product space, $\mathcal{X}\times\mathcal{P}$.  For brevity, we use the superscript, $\pi$, to denote the dependence on the known sampling probabilities, $\{\pi_{\nu ij}\}_{i,j \in U_{\nu}}$; for example,
\[
\displaystyle\Pi^{\pi}\left(B\middle\vert \mbf{X}_{1}\delta_{\nu 1},\ldots,\mbf{X}_{N_{\nu}}\delta_{\nu N_{\nu}}\right) := \Pi\left(B\middle\vert \left(\mbf{X}_{1}\delta_{\nu 1},\ldots,\mbf{X}_{N_{\nu}}\delta_{\nu N_{\nu}}\right),
	\{\pi_{\nu ij}: i,j \in U_{\nu} \} \right).
\]

Our main result is achieved in the limit as $\nu\uparrow\infty$, under the countable set of successively larger-sized populations, $\{U_{\nu}\}_{\nu \in \mathbb{Z}^{+}}$. 
% We define the associated rate of convergence notation, $\order{b_{\nu}}$, to denote $\mathop{\lim}_{\nu\uparrow\infty}\frac{\order{b_{\nu}}}{b_{\nu}} = 0$.
We define the associated rate of convergence notation, $a_{\nu} = \order{b_{\nu}}$, to denote $ |a_{\nu}| \le M |b_{\nu}|$ for a constant $M > 0$.

\subsection{Empirical process functionals}\label{empirical}
We employ the empirical distribution approximation for the joint distribution over population generation and the draw of an informative sample that produces our observed data to formulate our results.  Our empirical distribution construction follows \citet{breslow:2007} and incorporates inverse inclusion probability weights, $\{1/\pi_{\nu i}\}_{i=1,\ldots,N_{\nu}}$, to account for the informative sampling design,
\begin{equation}
\mathbb{P}^{\pi}_{N_{\nu}} = \frac{1}{N_{v}}\mathop{\sum}_{i=1}^{N_{\nu}}\frac{\delta_{\nu i}}{\pi_{\nu i}}\delta\left(\mbf{X}_{i}\right),
\end{equation}
where $\delta\left(\mbf{X}_{i}\right)$ denotes the Dirac delta function, with probability mass $1$ on $\mbf{X}_{i}$ and we recall that $N_{\nu} = \vert U_{\nu} \vert$ denotes the size of of the finite population. This construction contrasts with the usual empirical distribution, $\mathbb{P}_{N_{\nu}} = \frac{1}{N_{v}}\mathop{\sum}_{i=1}^{N_{\nu}}\delta\left(\mbf{X}_{i}\right)$, used to approximate $P \in \mathcal{P}$, the distribution hypothesized to generate the finite population, $U_{\nu}$.

We follow the notational convention of \citet{Ghosal00convergencerates} and define the associated expectation functionals with respect to these empirical distributions by $\mathbb{P}^{\pi}_{N_{\nu}}f = \frac{1}{N_{\nu}}\mathop{\sum}_{i=1}^{N_{\nu}}\frac{\delta_{\nu i}}{\pi_{\nu i}}f\left(\mbf{X}_{i}\right)$.  Similarly, $\mathbb{P}_{N_{\nu}}f = \frac{1}{N_{\nu}}\mathop{\sum}_{i=1}^{N_{\nu}}f\left(\mbf{X}_{i}\right)$.  Lastly, we use the associated centered empirical processes, $\mathbb{G}^{\pi}_{N_{\nu}} = \sqrt{N_{\nu}}\left(\mathbb{P}^{\pi}_{N_{\nu}}-P_{0}\right)$ and $\mathbb{G}_{N_{\nu}} = \sqrt{N_{\nu}}\left(\mathbb{P}_{N_{\nu}}-P_{0}\right)$.

The sampling-weighted, (average) pseudo-Hellinger distance between distributions, $P_{1}, P_{2} \in \mathcal{P}$, $d^{\pi,2}_{N_{\nu}}\left(p_{1},p_{2}\right) = \frac{1}{N_{\nu}}\mathop{\sum}_{i=1}^{N_{\nu}}\frac{\delta_{\nu i}}{\pi_{\nu i}}d^{2}\left(p_{1}(\mathbf{X}_{i}),p_{2}(\mathbf{X}_{i})\right)$, where $d\left(p_{1},p_{2}\right) = \left[\mathop{\int}\left(\sqrt{p_{1}}-\sqrt{p_{2}}\right)^{2}d\mu\right]^{\frac{1}{2}}$ (for dominating measure, $\mu$).
We need this empirical average distance metric because the observed (sample) data drawn from the finite population under $P_{\nu}$ are no longer independent.
The associated non-sampling Hellinger distance is specified with, $d^{2}_{N_{\nu}}\left(p_{1},p_{2}\right) = \frac{1}{N_{\nu}}\mathop{\sum}_{i=1}^{N_{\nu}}d^{2}\left(p_{1}(\mathbf{X}_{i}),p_{2}(\mathbf{X}_{i})\right)$.

\subsection{Main result}\label{main:results}
We proceed to construct associated conditions and a theorem that contain our main result on the consistency of the pseudo-posterior distribution under a broader class of informative sampling designs at the true generating distribution, $P_{0}$.  This approach follows the main in-probability convergence result of \citet{2015arXiv150707050S} which extends \citet{ghosal2007} by adding new conditions that restrict the distribution of the informative sampling design.
Instead of the standard asymptotic factorization condition, we provide two alternative conditions which allow for residual dependence between sampling units:

Suppose we have a  sequence, $\xi_{N_{\nu}} \downarrow 0$ and $N_{\nu}\xi^{2}_{N_{\nu}}\uparrow\infty$  and $n_{\nu}\xi^{2}_{N_{\nu}}\uparrow\infty$ as $\nu\in\mathbb{Z}^{+}~\uparrow\infty$ and any constant, $C >0$,

\begin{description}
\item[(A1)\label{existtests}] (Local entropy condition - Size of model)
        \begin{equation*}
        \mathop{\sup}_{\xi > \xi_{N_{\nu}}}\log N\left(\xi/36,\{P\in\mathcal{P}_{N_{\nu}}: d_{N_{\nu}}\left(P,P_{0}\right) < \xi\},d_{N_{\nu}}\right) \leq N_{\nu} \xi_{N_{\nu}}^{2},
        \end{equation*}
\item[(A2)\label{sizespace}] (Size of space)
        \begin{equation*}
        \displaystyle\Pi\left(\mathcal{P}\backslash\mathcal{P}_{N_{\nu}}\right) \leq \exp\left(-N_{\nu}\xi^{2}_{N_{\nu}}\left(2(1+2C)\right)\right)
        \end{equation*}
\item[(A3)\label{priortruth}] (Prior mass covering the truth)
        \begin{equation*}
        \displaystyle\Pi\left(P: -P_{0}\log\frac{p}{p_{0}}\leq \xi^{2}_{N_{\nu}}\cap P_{0}\left[\log\frac{p}{p_{0}}\right]^{2}\leq \xi^{2}_{N_{\nu}} \right) \geq \exp\left(-N_{\nu}\xi^{2}_{N_{\nu}}C\right)
        \end{equation*}
\item[(A4)\label{bounded}] (Non-zero Inclusion Probabilities)
        \begin{equation*}
        \displaystyle\mathop{\sup}_{\nu}\left[\frac{1}{\displaystyle\mathop{\min}_{i \in U_{\nu}}\vert\pi_{\nu i}\vert}\right] \leq \gamma, \text{  with $P_{0}-$probability $1$.}
        \end{equation*}

\item[(A5.1)\label{deprestrict}] (Growth of dependence is restricted)\\
For every $U_{\nu}$ there exists a binary partition $\{S_{\nu 1}, S_{\nu 2}\}$ of the set of all pairs $S_{\nu}= \{\{i,j\}: i\ne j \in U_{\nu}\}$ such that
        \begin{equation*}
         \displaystyle\mathop{\limsup}_{\nu\uparrow\infty} \left\vert S_{\nu 1} \right\vert = \order{N_{\nu}},
        \end{equation*}
        and
 	 \begin{equation*}
        \displaystyle\mathop{\limsup}_{\nu\uparrow\infty} \mathop{\max}_{i,j \in S_{\nu 2}}\left\vert\frac{\pi_{\nu ij}}{\pi_{\nu i}\pi_{\nu j}} - 1\right\vert = \order{N_{\nu}^{-1}}, \text{  with $P_{0}-$probability $1$}
        \end{equation*}
        such that for some constants, $C_{4},C_{5} > 0$ and for $N_{\nu}$ sufficiently large,
         \begin{equation*}
         \left\vert S_{\nu 1} \right\vert \le C_{4} N_{\nu},
        \end{equation*}
        and
        \begin{equation*}
        \displaystyle N_{\nu}\mathop{\sup}_{\nu}\mathop{\max}_{i,j \in S_{\nu 2}}\left\vert\frac{\pi_{\nu ij}}{\pi_{\nu i}\pi_{\nu j}} - 1\right\vert \leq C_{5},
        \end{equation*}
	
\item[(A5.2)\label{depblock}] (Dependence restricted to countable blocks of bounded size)\\	
For every $U_{\nu}$ there exists a partition $\{B_1,\dots, B_{D_{\nu}}\}$ of $U_{\nu}$ with $D_{\nu} \le N_{\nu}$,
$\mathop{\lim}_{\nu\uparrow\infty} D_{\nu} = \order{N_{\nu}}$, and the maximum size of each subset is bounded:
        \begin{equation*}
        1 \le \displaystyle\mathop{\sup}_{\nu} \displaystyle\mathop{\max}_{d \in 1,\dots, D_{\nu}}\left\vert B_d\right\vert \leq C_{4},
        \end{equation*}
Such that the set of all pairs $S_{\nu}= \{\{i,j\}: i\ne j \in U_{\nu}\}$ can be partitioned into
 $S_{\nu 1} = \left\{\{i,j\}: i\ne j \in  B_{d}, d \in \{1, \ldots, D_{\nu} \}\right\}$ and\\
$S_{\nu 2} = \left\{\{i,j\}: i \in B_d \cap j \notin B_{d}, d \in \{1, \ldots, D_{\nu}\}\right\}$ with
	 \begin{equation*}
        \displaystyle\mathop{\limsup}_{\nu\uparrow\infty} \mathop{\max}_{i,j \in S_{\nu_2}}\left\vert\frac{\pi_{\nu ij}}{\pi_{\nu i}\pi_{\nu j}} - 1\right\vert = \order{N_{\nu}^{-1}}, \text{  with $P_{0}-$probability $1$}
        \end{equation*}
        such that for some constant, $C_{5} > 0$,
        \begin{equation*}
        \displaystyle N_{\nu}\mathop{\sup}_{\nu}\mathop{\max}_{i,j \in S_{\nu_2}}\left\vert\frac{\pi_{\nu ij}}{\pi_{\nu i}\pi_{\nu j}} - 1\right\vert \leq C_{5}, \text{  for $N_{\nu}$ sufficiently large.}
        \end{equation*}

\item[(A6)\label{fraction}] (Constant Sampling fraction)
        For some constant, $f \in(0,1)$, that we term the ``sampling fraction",
        \begin{equation*}
        \mathop{\limsup}_{\nu}\displaystyle\biggl\vert\frac{n_{\nu}}{N_{\nu}} - f\biggl\vert = \order{1}, \text{  with $P_{0}-$probability $1$.}
        \end{equation*}

\end{description}
The first three conditions are the same as  \citet{ghosal2007}. They restrict the growth rate of the model space (e.g., of parameters) and require prior mass to be placed on an interval containing the true value.
Condition~\nameref{bounded} requires the sampling design to assign a positive probability for inclusion of every unit in the population because the restriction bounds the sampling inclusion probabilities away from 0. Condition~\nameref{fraction} ensures that the observed sample size, $n_{\nu}$, limits to $\infty$ along with the size of the partially-observed finite population, $N_{\nu}$, such that the variation of information about the population expressed in realized samples is controlled.

\citet{2015arXiv150707050S} rely on asymptotic factorization for all pairwise inclusion probabilities. Their (A.5) condition is a conservative approach to establish a finite upper bound for the un-normalized posterior mass assigned to those models, $P$, at some minimum distance from the truth, $P_0$.   They require all terms, a set of size $\order{N_{\nu}^{2}}$, to factorize with the maximum deviation term shrinking at a rate of $\order{N_{\nu}^{-1}}$, since there are $N^2$ terms divided by $N$ (inherited from an empirical process).

Although their condition guarantees the $L_1$ contraction result, it defines an overly narrow class of sampling designs under which this guaranteed result holds.  As discussed in the introduction, multistage household survey designs are not members of this allowed class because the within household dependency does not attenuate for a set of pairs of size $\order{N_{\nu}}$.  We replace their (A5) with \nameref{deprestrict}, which allows up to $\order{N_{\nu}}$ pairwise terms to not factor, such that there remains a residual dependence.   We show in the Appendix that the contraction result may, nevertheless, be guaranteed under this condition as each of the non-factoring terms has an $\order{1}$ bound.   The implication of our condition is that we have constructed a wider class of sampling designs that includes those from \citet{2015arXiv150707050S}, in addition to the multistage cluster designs for fixed cluster sizes.

Our condition \nameref{depblock} is a special case of \nameref{deprestrict} specified for cluster designs where the number of units per cluster is bounded by a constant, which encompasses the multistage NSDUH household design from which we draw our application data set.   We walk from \nameref{deprestrict} to \nameref{depblock} by constructing $S_{\nu 1}$ through a collection of clusters $(B_{\nu 1},…,B_{\nu D_{\nu}})$,  where the size $|B_{\nu d}|$ is bounded from above.  Sampling dependence within each cluster $B_{\nu d}$ is unrestricted, while dependence across clusters must asymptotically factor.

\begin{theorem}
\label{main}
Suppose conditions ~\nameref{existtests}-\nameref{fraction} hold.  Then for sets $\mathcal{P}_{N_{\nu}}\subset\mathcal{P}$, constants, $K >0$, and $M$ sufficiently large,
\begin{align}\label{limit}
&\mathbb{E}_{P_{0},P_{\nu}}\Pi^{\pi}\left(P:d^{\pi}_{N_{\nu}}\left(P,P_{0}\right) \geq M\xi_{N_{\nu}} \vert \mbf{X}_{1}\delta_{\nu 1},\ldots,\mbf{X}_{N_{\nu}}\delta_{\nu N_{\nu}}\right) \leq\nonumber\\
&\frac{16\gamma^{2}\left[\gamma \mathbf{C_{2}} +C_{3}\right]}{\left(Kf + 1 - 2\gamma\right)^{2}N_{\nu}\xi_{N_{\nu}}^{2}} + 5\gamma\exp\left(-\frac{K n_{\nu}\xi_{N_{\nu}}^{2}}{2\gamma}\right),
\end{align}
which tends to $0$ as $\left(n_{\nu}, N_{\nu}\right)\uparrow\infty$.
\end{theorem}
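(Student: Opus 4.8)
The plan is to follow the Bayesian posterior-contraction machinery of \citet{ghosal2007}, as adapted to informative sampling in \citet{2015arXiv150707050S}, and to localize the entire novelty in a single second-moment calculation governed by the dependence conditions (A5.1)/(A5.2). Throughout, expectations are taken jointly over population generation under $P_0$ and the draw of the sample under $P_\nu$, and I would exploit the design-unbiasedness identity $\mathbb{E}_{P_\nu}[\delta_{\nu i}/\pi_{\nu i}\mid \mathbf{X}] = 1$, which makes the weighted functional $\mathbb{P}^{\pi}_{N_\nu} f$ have conditional mean $\mathbb{P}_{N_\nu} f$ and hence overall mean $P_0 f$.

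First I would establish the key variance bound. For the relevant bounded functional, namely the weighted log-likelihood ratio $\frac{1}{N_\nu}\sum_i \frac{\delta_{\nu i}}{\pi_{\nu i}}\log\frac{p}{p_0}(\mathbf{X}_i)$ and the analogous pseudo-Hellinger functional, I compute the variance by splitting into $N_\nu$ diagonal terms and the off-diagonal covariances, which are proportional to $(\pi_{\nu ij}/(\pi_{\nu i}\pi_{\nu j}) - 1)$. Condition (A4) bounds the inverse marginals by $\gamma$, so the diagonal contributes $\order{N_\nu}$. For the off-diagonal sum I partition the index pairs as in (A5.1): the $\order{N_\nu}$ pairs in $S_{\nu 1}$ each carry only an $\order{1}$ deviation, so that using $|S_{\nu 1}|\le C_4 N_\nu$ their total is $\order{N_\nu}$, producing the constant $C_2$; the remaining $\order{N_\nu^2}$ pairs in $S_{\nu 2}$ each carry a deviation of order $C_5/N_\nu$, again summing to $\order{N_\nu}$ and producing $C_3$. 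Dividing by the $N_\nu^2$ from the averaging shows the variance of the weighted average is $\order{1/N_\nu}$, with the numerator constants $\gamma C_2 + C_3$ appearing exactly as in the theorem. Condition (A5.2) is then the specialization obtained by taking $S_{\nu 1}$ to be the within-cluster pairs of the bounded-size blocks $\{B_d\}$.

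With the variance in hand I would carry out the three standard steps. For the denominator, using the prior-mass condition (A3) on a Kullback--Leibler neighbourhood of $P_0$ together with a Chebyshev argument driven by the Step-one variance, I show that the evidence $\int \prod_i (p^{\pi}/p_0^{\pi})\,d\Pi$ exceeds $\exp(-(1+2C)N_\nu\xi_{N_\nu}^2)$ except on an event whose probability is at most the polynomial term $16\gamma^2[\gamma C_2 + C_3]/((Kf+1-2\gamma)^2 N_\nu\xi_{N_\nu}^2)$, where the factor $(Kf+1-2\gamma)$ and the effective size $n_\nu\approx fN_\nu$ trace back to (A6) and (A4). For the tests, the local-entropy bound (A1), via the usual covering and convex-hull construction applied to the weighted metric $d^{\pi}_{N_\nu}$, yields exponentially powerful tests separating $P_0$ from $\{d^{\pi}_{N_\nu}\ge M\xi_{N_\nu}\}$ on the sieve, with error $\order{\exp(-Kn_\nu\xi_{N_\nu}^2/(2\gamma))}$, the second term of the bound. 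For the sieve complement, (A2) controls the prior mass outside $\mathcal{P}_{N_\nu}$, so that after dividing by the denominator this piece is absorbed.

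Finally I would assemble via the decomposition $\mathbb{E}\Pi^{\pi}(\cdot)\le \mathbb{E}\phi_{N_\nu} + \Pr(\text{denominator small}) + \mathbb{E}[(1-\phi_{N_\nu})\mathbf{1}\{\text{denominator large}\}\Pi^{\pi}(\cdot)]$, bounding the final term by the test Type-II error on the sieve plus the (A2) term on its complement. Collecting the polynomial contribution from the denominator step and the exponential contribution from the tests gives the stated bound, which tends to $0$ since $N_\nu\xi_{N_\nu}^2\uparrow\infty$ and $n_\nu\xi_{N_\nu}^2\uparrow\infty$. The principal obstacle is the variance step: unlike in the asymptotic-factorization setting, the $S_{\nu 1}$ covariances do not vanish, so the argument must rely on their cardinality being only $\order{N_\nu}$ rather than on per-term decay, and care is needed to confirm that the two heterogeneous blocks combine into a single $\order{1/N_\nu}$ variance without the non-attenuating block inflating the effective sample size.
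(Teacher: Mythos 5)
Your proposal is correct and follows essentially the same route as the paper: the authors likewise keep the Ghosal--van der Vaart/Savitsky--Toth test-plus-evidence decomposition intact and localize all the novelty in the denominator (evidence) lemma, where a Chebyshev bound reduces everything to the second moment of $\mathbb{G}^{\pi}_{N_{\nu}}\log\tfrac{p}{p_{0}}$, and the off-diagonal covariance sum is split exactly as you describe --- the $\order{N_{\nu}}$ non-factoring pairs in $S_{\nu 1}$ controlled by their cardinality ($C_{4}$, giving $C_{2}=C_{4}+1$) and the $\order{N_{\nu}^{2}}$ pairs in $S_{\nu 2}$ controlled by the per-term $C_{5}/N_{\nu}$ decay (giving $C_{3}=C_{5}+1$). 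Your identification of the ``principal obstacle'' is precisely the content of the paper's Lemma~\ref{denominator}, so there is nothing substantive to add.
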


\begin{proof}
The proof follows exactly that in \citet{2015arXiv150707050S} where we bound the numerator (from above) and the denominator (from below) of the expectation with respect to the joint distribution of population generation and the taking of a sample of the pseudo-posterior mass placed on the set of models, $P$, at some minimum pseudo-Hellinger distance from $P_{0}$. We reformulate one of the enabling lemmas of \citet{2015arXiv150707050S}, which we present in an Appendix, where the reliance on (their) condition (A5) requiring asymptotic factoring of pairwise unit inclusion probabilities is here replaced by condition  \nameref{deprestrict} that allows for non-factorization of a subset of pairwise inclusion probabilities.
\end{proof}

As noted in \citet{2015arXiv150707050S}, the rate of convergence is decreased for a sampling distribution, $P_{\nu}$, that expresses a large variance in unit pairwise inclusion probabilities such that $\gamma$ will be relatively larger. Samples drawn under a design that expresses a large variability in the first order sampling weights will express more dispersion in their information relative to a simple random sample of the underlying finite population. We construct $C_3 = C_5 + 1$ and $C_2 = C_4 + 1$.
Under the more restrictive condition (A5) of \citet{2015arXiv150707050S}, our constant $C_4 = 0$ and thus $C_2 = 1$.

\section{Simulation Examples}\label{sec:sims}
We construct a population model to address our inferential interest of a binary outcome $y$ with a linear predictor $\mu$.
\begin{comment}Even though our motivating example, the association between past year smoking and past year major depressive episode (MDE) are two binary measures, for ease of demonstration we use a binary outcome and continuous predictor for our simulation study.
\end{comment}
\begin{equation} \label{pop_like}
y_{i} \mid \mu_{i} \ind Bern \left(F^{-1}_l(\mu_{i}) \right),~ i = 1,\ldots,N
\end{equation}
where $F^{-1}_l$ is the quantile function (inverse cumulative function) for the logistic distribution.
We let $\mu$ depend on two predictors $x_1$ and $x_2$. The variable $x_1$ represents the observed information available for analysis, whereas $x_2$ represents information available for sampling, which is either ignored or not available for analysis. The $x_1$ and $x_2$ distributions are $\mathcal{N}(0,1)$ and $\mathcal{E}(r =1/5)$ with rate $r$, where $\mathcal{N}(\cdot)$ and $\mathcal{E}(\cdot)$ represent normal and exponential distributions, respectively. The size measure used for sample selection is  $\tilde{\bm{x}}_{2} = \bm{x}_{2} - \min (\bm{x}_{2}) + 1$.
\[
\bm{\mu} = -1.88 + 1.0 \bm{x}_{1}+ 0.5 \bm{x}_{2}
\]
where the intercept was chosen such that the median of $\mu$ is approximately 0, therefore the median of $F^{-1}_l(\bm{\mu})$ is approximately 0.5.

Even though the population response $y$ was simulated with $\mu = f(x_1,x_2)$, we estimate the marginal models at the population level for $\mu = f(x_1)$. This exclusion of $x_2$ is analogous to the situation in which an analyst does not have access to all the sample design information and ensures that our sampling design instantiates informativeness (where $y$ is correlated with the selection variable, $x_{2}$, that defines inclusion probabilities). In particular, we estimate the models under each of several sample design scenarios and compare the population fitted models, $\mu = f(x_1)$, to those from the samples.

We formulate the logarithm of the sampling-weighted pseudo-likelihood for estimating $(\bm{\mu},\lambda)$ from our observed data for the $ n\leq N$ sampled units,
\begin{align}\label{pseudo_like}
\log\left[\mathop{\prod}_{i=1}^{n} p\left(y_{i}\mid \mu_{i}\right)^{w^{\ast}_{i}}\right] &= \mathop{\sum}_{i=1}^{n}w^{\ast}_{i}\log p \left( y_{i}\mid \mu_{i}\right) \nonumber\\
&= \mathop{\sum}_{i=1}^{n} w^{\ast}_{i} y_{i} \log(\theta_i) + w^{\ast}_{i} (1- y_{i}) \log(1-\theta_i),
\end{align}
where $\theta_i =  F^{-1}_l(\mu_{i})$ and the sampling weights, $w^{\ast}_{i}$ are normalized such that the sum of the weights equals the sample size $\mathop{\sum}_{i=1}^{n}w^{\ast}_{i} = n$.

Finally, we estimate the joint posterior distribution using Equation~\ref{pseudo_like}, coupled with our prior distributions assignments, using the NUTS Hamiltonian Monte Carlo algorithm implemented in Stan \citep{stan:2015}.

\FloatBarrier
\subsection{Multistage Cluster Designs}
We begin by abstracting the five-stage, geographically-indexed NSDUH sampling design \citep{MRB:Sampling:2014} to a simpler, three stage design of \{area segment, household, individual\} that we use to draw samples from a synthetic population in a manner that still generalizes to the NSDUH (and similar multistage sampling designs where the number of last stage units does not grow with overall population size).
We simulate a population of N = 6000, with 200 primary sampling units (PSUs) each containing 10 households (HHs) which each contain 3 individuals with independent responses $y_i$.

For the simulation, the number of selected PSUs was varied $K \in \{10, 20, 40, 80, 160\}$, the number of HHs within each PSU was fixed at 5, and the number of selected individuals within each HH was 1. Each setting was repeated $M = 200$ times. Details for the selection at each stage follows:
\begin{enumerate}
	\item For each PSU indexed by $k$, an aggregate size measure $X_{2,k} = \sum_{ij} x_{2,ij|k}$ was created summing over all individuals $i$ and HHs $j$ in PSU $k$. PSUs are then selected proportional to this size measure based on Brewer's PPS algorithm \citep{BrewerPPS}.
	\item Once PSUs are selected, for each HH within the selected PSUs indexed by $j$, an aggregate size measure $X_{2,j|k} = \sum_{i} x_{2,i|jk}$ was created summing over all individuals $i$ within each HH in the selected PSUs. HHs are selected independently across PSUs. Within each PSU, HHs are selected systematically with equal probability by first sorting on $X_{2,j|k}$ and then selecting a random starting point.
	\item Within each selected HH, a single person is selected with probability proportional to size $x_{2,i|jk}$.
\end{enumerate}

The nested structure of the sampling induces asymptotic independence between PSU's. Within PSUs, the systematic sampling of HHs creates a block of non-attenuating dependence between households. Likewise, the sampling of only one person within each HH creates a joint dependence $\pi_{ii'|jk} = 0$ between individuals within the same HH. Therefore, non-factorization of the second order inclusions remains within each PSU (see Figure \ref{fig:3stagefactor}).
Figure \ref{fig:bin2pred} compares the bias and mean square error (MSE) for estimation with equal weights (black) and inverse probability weights (blue). As expected, the sampling weights remove bias and lead to convergence, since the non-factoring pairwise inclusion probablities are of $\order{N}$.

\begin{figure}
\centering
\includegraphics[width = 0.45\textwidth,
		page = 1,clip = true, trim = 0.7in 0in 0.7in 0in]{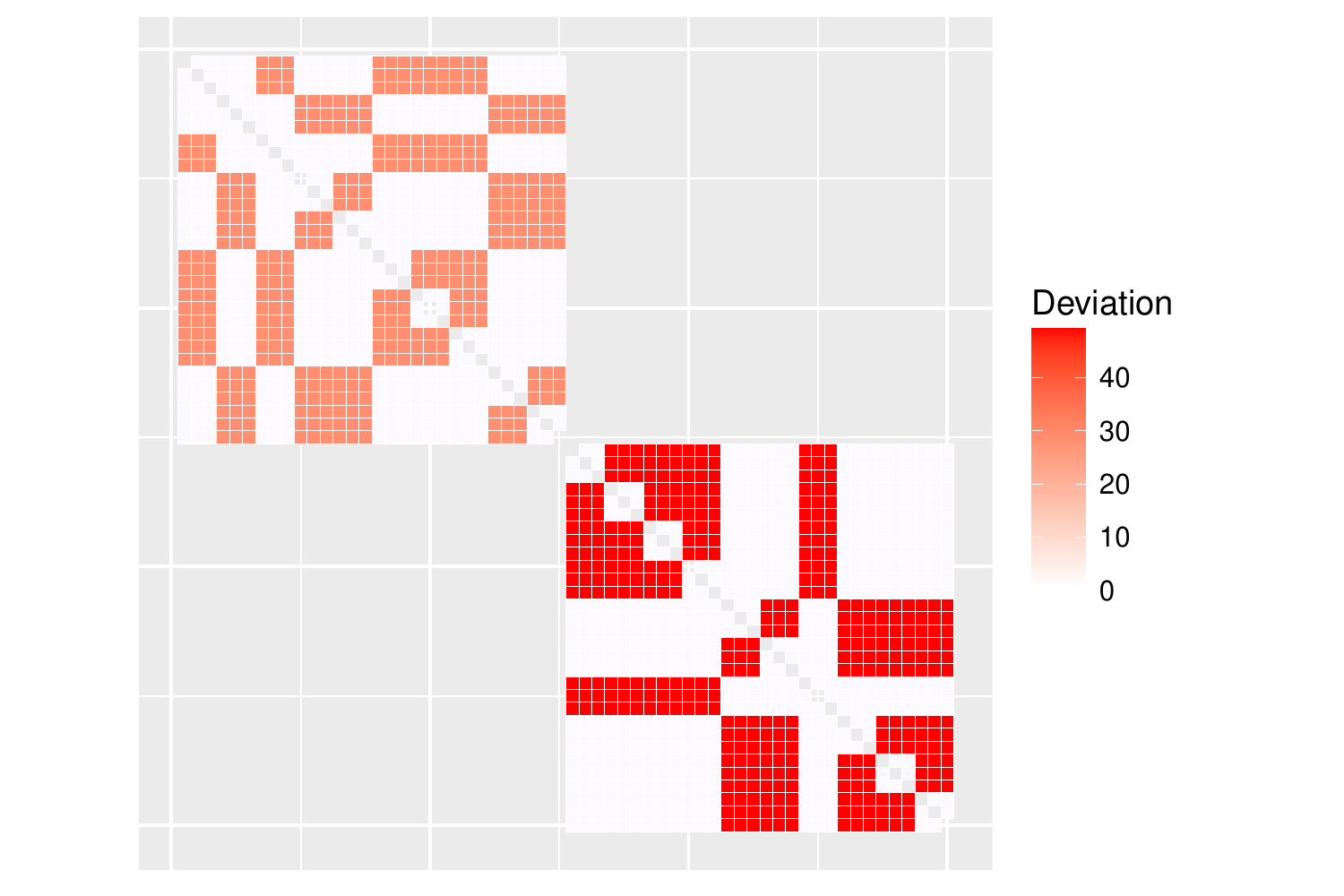}
\includegraphics[width = 0.45\textwidth,
		page = 1,clip = true, trim = 0.7in 0in 0.7in 0in]{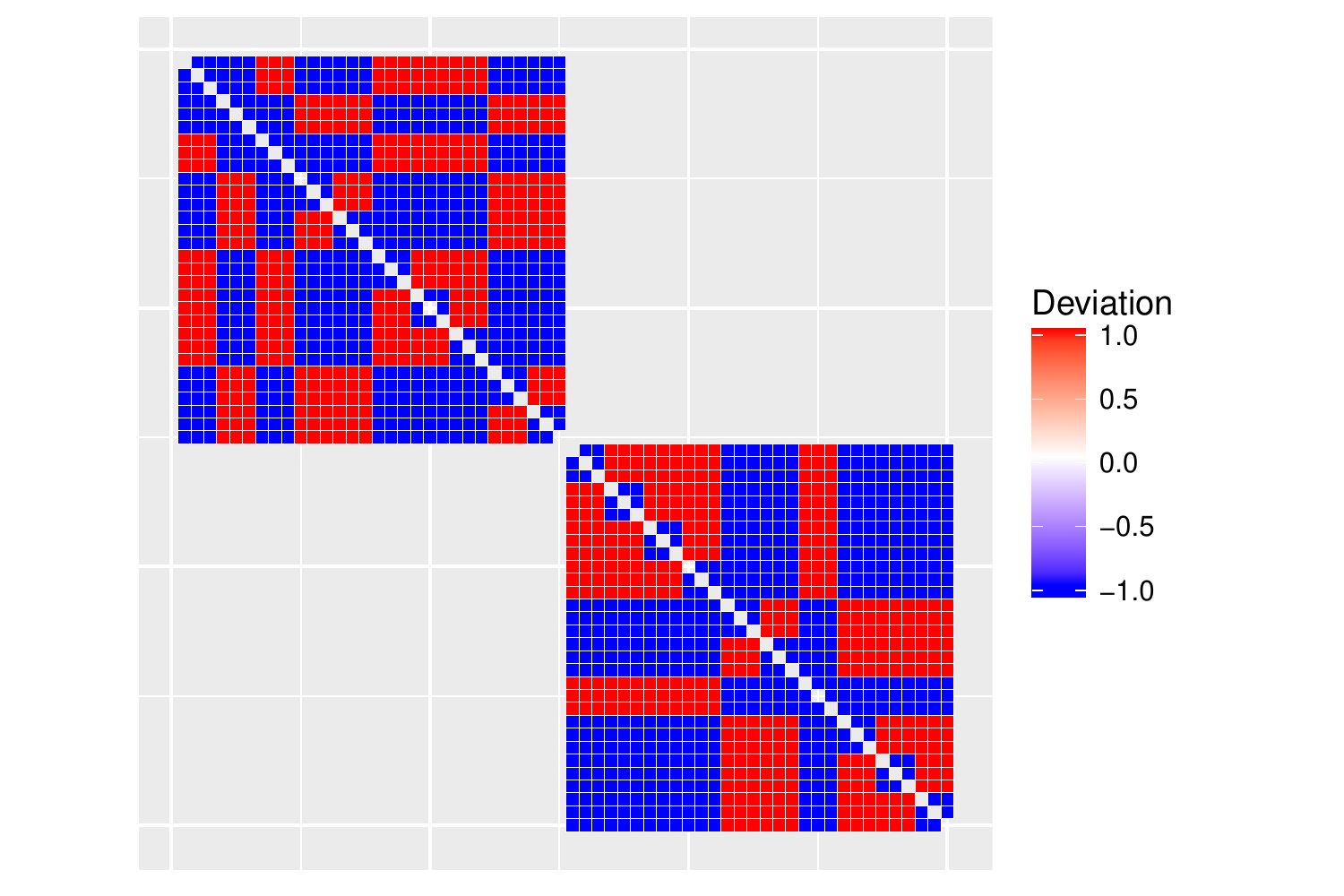}
\caption{Matrix $\{i,j\}$ of deviations from factorization $\left(\pi_{ij}/(\pi_{i} \pi_{j}) - 1\right)$ for two PSUs (out of a population of 200) from the three stage sample design. Each PSU contains 10 HHs, which each contain 3 persons. Magnitude (left) and sign (right) of deviations. Empty cells correspond to $0$ deviation (factorization).}
\label{fig:3stagefactor}
\end{figure}

\begin{figure}
\centering
\includegraphics[width = 0.95\textwidth,
		page = 1,clip = true, trim = 0in 0in 0in 0.in]{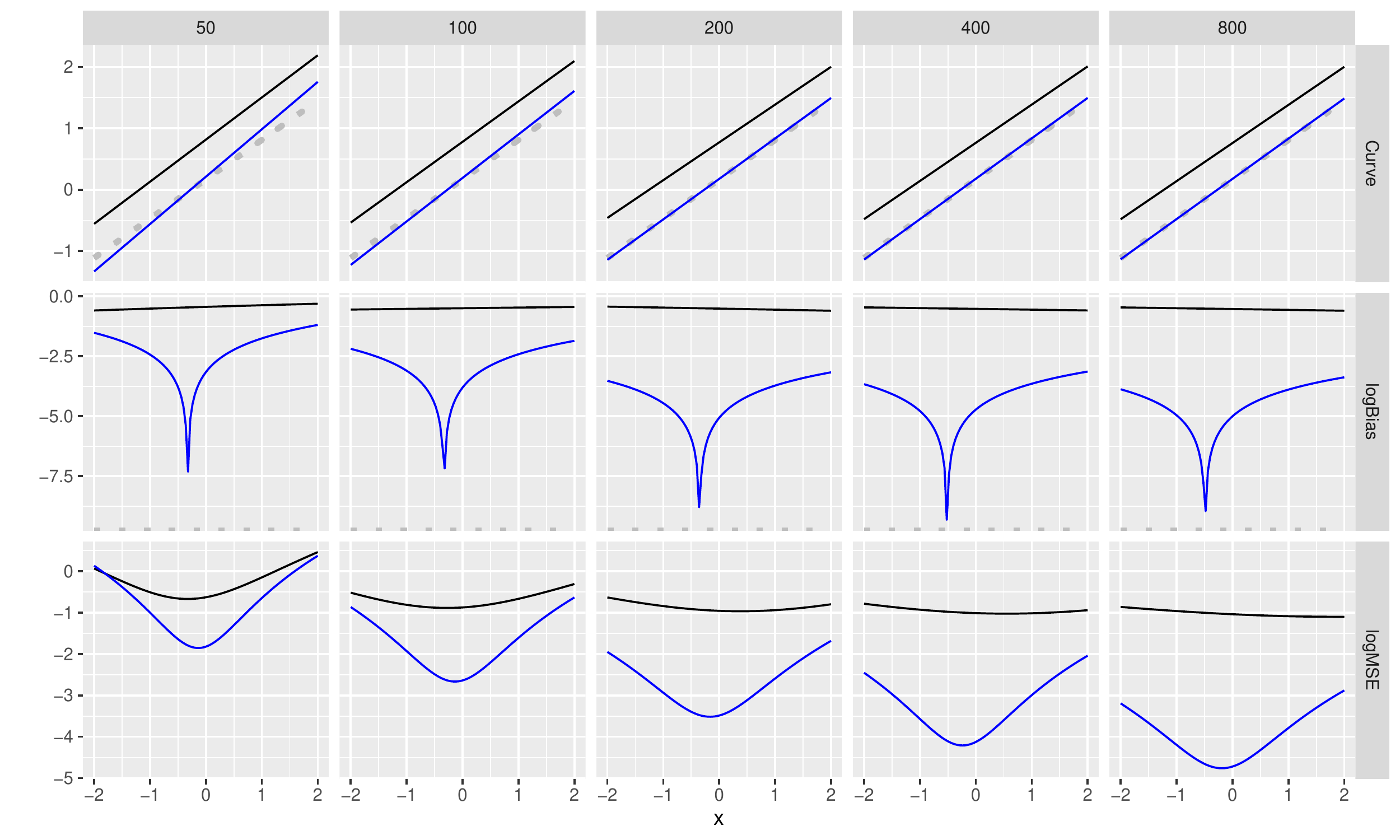}
\caption{The marginal estimate of $\mu = f(x_1)$ under a linear model and three-stage sampling design. Compares the (true) population curve (broken grey) to the whole sample with equal weights (black), and inverse probability weights (blue). Top to bottom: estimated curve, log of absolute bias, log of mean square error. Left to right: doubling of sample size (50 to 800). }
\label{fig:bin2pred}
\end{figure}
\FloatBarrier

\subsection{Dependent Sampling of First Stage Units}
We now use the same population response model and distributions for $y$, $x_1$, and $x_2$ but consider the case of single stage sampling designs where the sample size is half the population (i.e. a partition of size $N/2$). In particular, we construct a design with second order dependence that grows $\order{N^{2}}$ and demonstrate that estimates for this design fail to converge. However, with slight modifications, the design can be altered into $\order{N}$ dependence and does demonstrate convergence, as predicted by the theory.

One simple way to create an informative design is to use the size measure $\tilde{\bm{x}}_{2}$ to sort the population.  Partition the population $U$ into a ``high'' ($U_1$) group with the top $N/2$ and a ``low''  ($U_2$) group with the bottom $N/2$.
This partition rule leads to an outcome space with only two possible samples of size $N/2$: $U_1$ and $U_2$. For simplicity, assume an equal probability of selection of $1/2$. Then it follows that $\pi_{i} = 1/2$, for all $i \in 1,\ldots, N$, and $\pi_{ij} = 1/2$ if $i \ne j \in U_k$, for $k = 1,2$ and 0 otherwise. In fact, all joint inclusions, from orders 2 to $N/2$, are 1/2 if all members indexed are in the same partition and 0 otherwise. These second and higher order inclusion probabilities do not factor with increasing population size $N$. Thus, the number of pairwise inclusions probabilities that do not factor ($\pi_{ij} \ne 1/4$) grows at rate $\order{N^{2}}$, violating condition~\nameref{deprestrict}.

Alternatively, we could embed the partitioning procedures within strata, where the strata are created according to rank order, have a fixed size, and the number of strata grow with population size $N$. For example grouping every 50 units into a strata, then partitioning within each. Such a modification is relatively minor, but leads to factorization for all but $\order{N}$ pairwise inclusion probabilities. This can be visualized as the diagonal blocks in the full pairwise inclusion matrix (see Figure \ref{fig:factorstrata}).

\begin{figure}
\centering
\includegraphics[width = 0.95\textwidth,
		page = 1,clip = true, trim = 0in 0in 0in 0in]{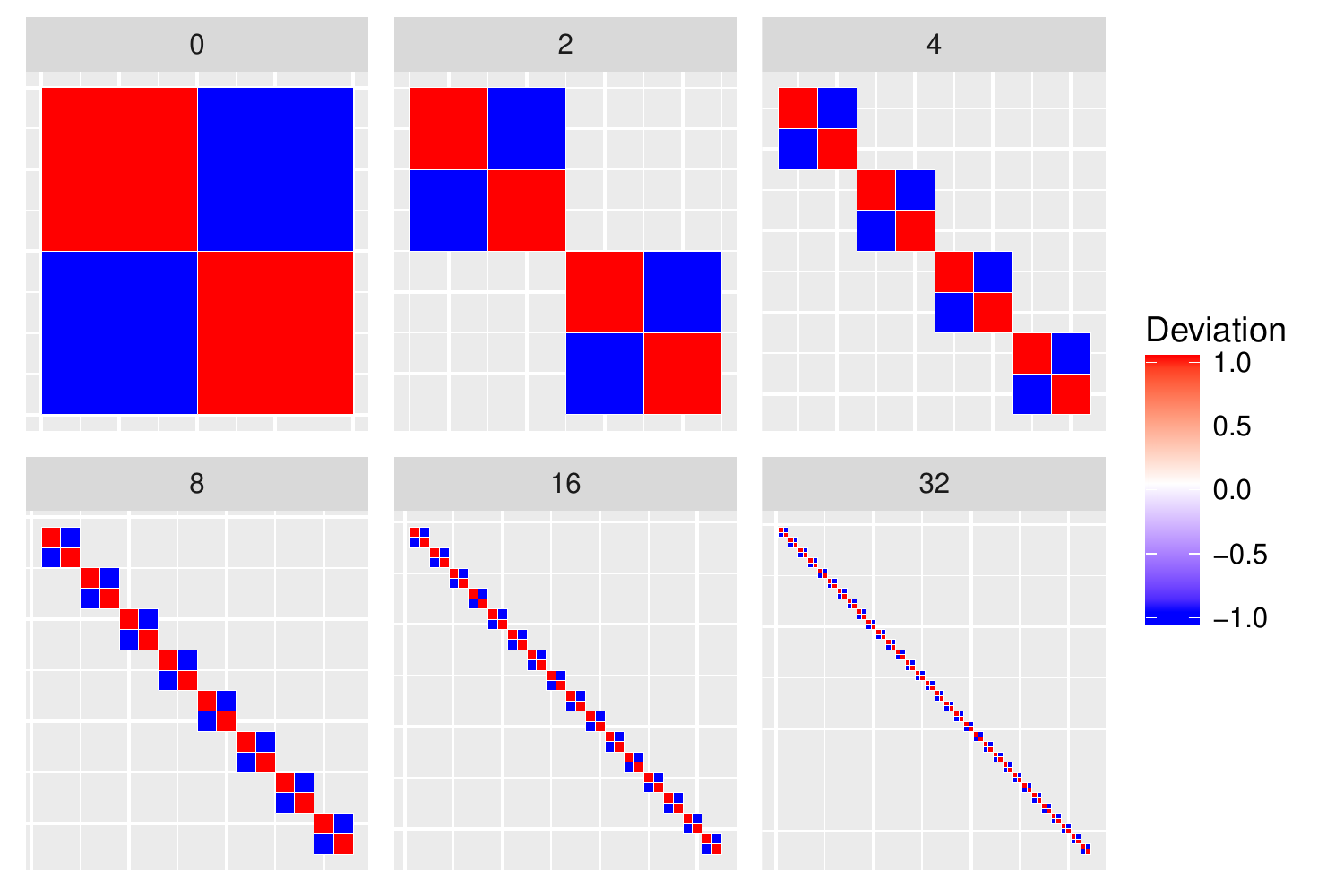}
\caption{Matrix $\{i,j\}$ of deviations from factorization $\left(\pi_{ij}/(\pi_{i} \pi_{j}) - 1\right)$ for an equal probability dyadic partition design by number of strata (0 to 32). Empty cells correspond to $0$ deviation (factorization).}
\label{fig:factorstrata}
\end{figure}

For each $N \in \{100, 200, 400, 800, 1600\}$, we generate a single population and compare the relative convergence of the original dyadic partitions and the stratified versions. Figure \ref{fig:rankpar} compares the bias and mean square error (MSE) of the two partitions (red and blue) compared to the average of 100 samples from the stratified version (black). It's clear that as the population size (and sample size) grows, the bias of the two partitions does not go away (the variability is due to a single realization of the population at each size), while the overall bias and MSE of the stratified version clearly decreases with increasing N, consistent with the theory.

\begin{figure}
\centering
\includegraphics[width = 0.95\textwidth,
		page = 1,clip = true, trim = 0in 0in 0in 0.in]{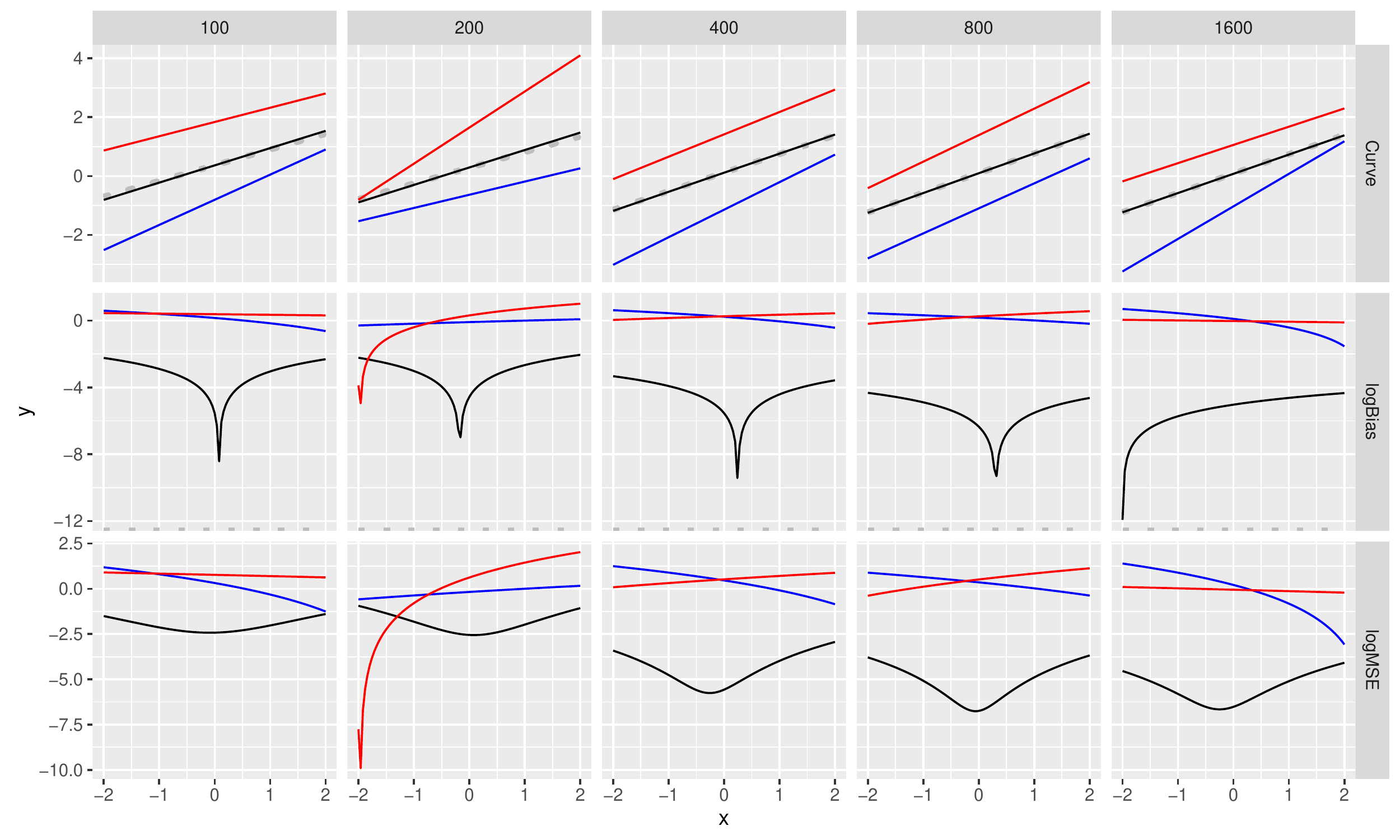}
\caption{The marginal estimate of $\mu = f(x_1)$ under a linear model and one-stage sampling design with dyadic partitions. Compares the (true) population curve (broken grey) to binary partition (red and blue) and the stratified partition sample (black) - one stratum per 50 individuals, each divided into a binary partition, repeated 100 times. Top to bottom: estimated curve, log of absolute bias, log of mean square error. Left to right: doubling of population size (100 to 1600).}
\label{fig:rankpar}
\end{figure}

\FloatBarrier

\section{Application to the NSDUH}
\label{sec:NSDUH}
%develop a bit more - relate design features to sim results and assumptions A5.2

A simple logistic model of current (past month) smoking status by past year major depressive episode (MDE) was fit via the survey weighted psuedo-posterior as described in section \ref{sec:sims} using both equal and probability-based analysis weights for adults from the 2014 NSDUH public use data set (Figure \ref{fig:NSDUH}).
It is reasonable to assume that equal weights lead to higher estimates of smoking, as young adults are more likely to smoke and are over-sampled. Based on the theoretical results and the simulation study presented in this paper, we have justification that the probability-based weights have removed this bias and provide consistent estimation. The large number of strata and the asymptotically independent first stage of selection creates factorization for all but $\order{N}$ pairwise inclusion probabilities, even though the clustering and the sorting of units before selection may be informative.

\begin{figure}
\centering
\includegraphics[width = 0.75\textwidth,
		page = 1,clip = true, trim = 0in 0in 0in 0.in]{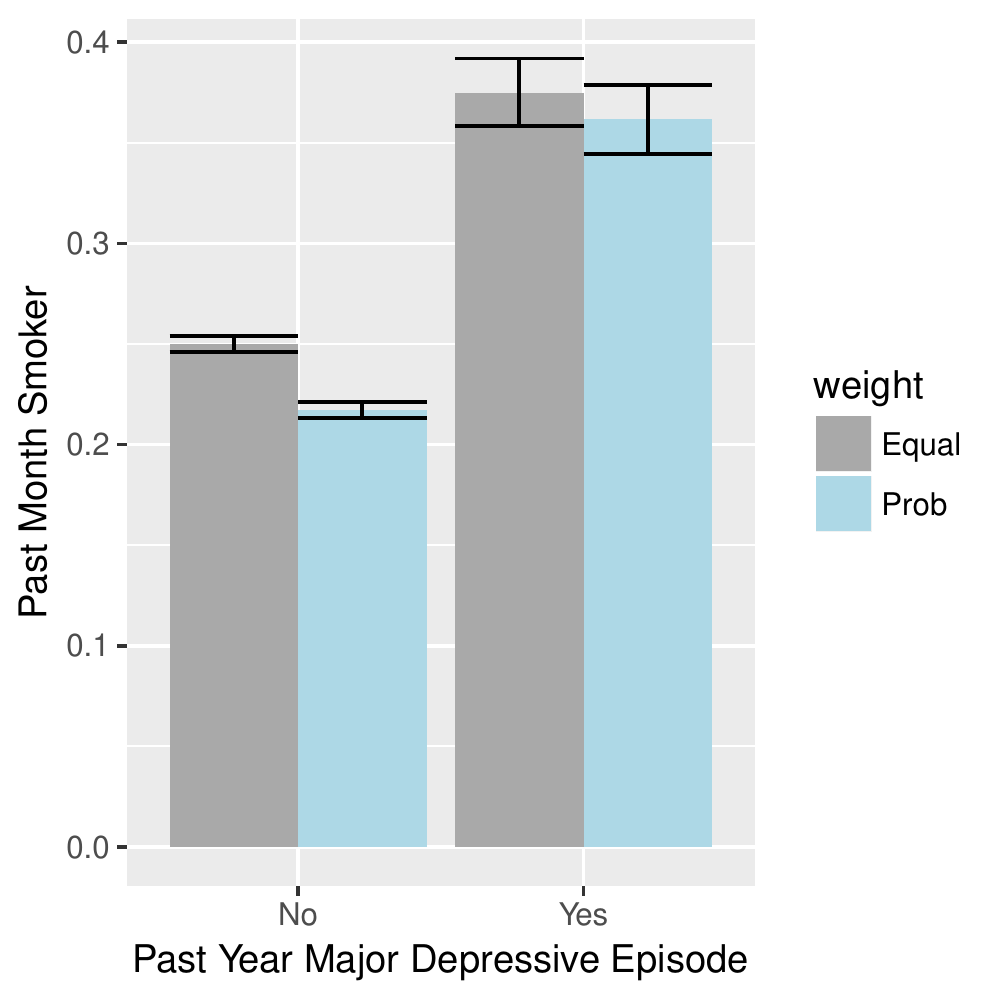}
\caption{Posterior estimates (mean and 95\% intervals) for adults in the US who smoked cigarettes in the past month by past year major depressive episode, using equal weights (grey) and probability based analysis weights (blue) based on the 2014 National Survey on Drug Use and Health.}
\label{fig:NSDUH}
\end{figure}
\FloatBarrier

\section{Conclusions}
\label{sec:conc}

This work is motivated by the discrepancy between the theory available to justify consistent estimation for survey sample designs and the practice of estimation for complex, multistage cluster designs such as the NSDUH. Previous requirements for approximate or asymptotic factorization of joint sampling probabilities exclude such designs, leaving the practitioner unable to fully justify their use. We have presented an alternative requirement that allows for unrestricted sampling dependence to persist asymptotically rather than to attenuate. For example, dependence between units within a cluster is unrestricted provided that the cluster size is bounded and dependence between clusters attenuates. This dependence can be positive (joint selection) or negative (mutual exclusion). Results are further demonstrated via a simulation study of a simplified NSDUH design. %joint inclusion of 0, mutually exclusive - allowable here - contrast to pair%

Additional simulations expand our understanding of the impact of sorting. While the direct application of these methods can lead to dependence among all units (effectively one cluster of infinite size), embedding these features within stratified or clustered designs can be justified (for subsequent estimation using marginal sampling weights) by our main results and performs well in simulation and in practice. For example, geographic units sorted along a gradient can now be fully justified for the NSDUH, because the sampling along this gradient occurs independently across a large number of strata.

With this work, the use of the sample weighted pseudo-posterior \citep{2015arXiv150707050S} is now available to a much wider variety of survey programs. We note that while establishing consistency is essential, understanding other properties of pseudo-posteriors such as posterior intervals, still requires more research. Furthermore, so called ``Fully Bayesian'' methods, which avoid a plug-in estimator for the sampling weights by jointly modelling the outcome and the sample selection process, are also being researched \citep{2017arXiv171000019N}. The theory uses the stricter conditions for asymptotic factorization of the sample design and could be generalized by using conditions for the sample design that are similar to those presented in this work.

\appendix

\section{Enabling Lemmas}
%mention first lemma and relaxation in other paper
\begin{lemma}\label{numerator}
Suppose conditions~\nameref{existtests} and ~\nameref{bounded} hold.  Then for every $\xi > \xi_{N_{\nu}}$, a constant, $K>0$, and any constant, $\delta > 0$,
\begin{align}
\mathbb{E}_{P_{0},P_{\nu}}\left[\mathop{\int}_{P\in\mathcal{P}\backslash\mathcal{P}_{N_{\nu}}}\mathop{\prod}_{i=1}^{N_{\nu}}
\frac{p^{\pi}}{p_{0}^{\pi}}\left(\mbf{X}_{i}\delta_{\nu i}\right)d\Pi\left(P\right)\left(1-\phi_{n_{\nu}}\right)\right] \leq \Pi\left(\mathcal{P}\backslash\mathcal{P}_{N_{\nu}}\right)& \label{outside}\\
\mathbb{E}_{P_{0},P_{\nu}}\left[\mathop{\int}_{P\in\mathcal{P}_{N_{\nu}}:d^{\pi}_{N_{\nu}}\left(P,P_{0}\right)> \delta\xi}\mathop{\prod}_{i=1}^{N_{\nu}}
\frac{p^{\pi}}{p_{0}^{\pi}}\left(\mbf{X}_{i}\delta_{\nu i}\right)d\Pi\left(P\right)\left(1-\phi_{n_{\nu}}\right)\right] &\leq \nonumber \\
2\gamma\exp\left(\frac{-K n_{\nu}\delta^{2}\xi^{2}}{\gamma}\right).&\label{inside}
\end{align}
\end{lemma}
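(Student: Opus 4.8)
The plan is to prove both bounds by interchanging the joint expectation $\mathbb{E}_{P_0,P_\nu}$ with the prior integral $\int d\Pi(P)$ via Fubini's theorem, after which \eqref{outside} is controlled by prior mass alone and \eqref{inside} by the power of a test $\phi_{n_\nu}$ whose existence is supplied by the local entropy condition (A1). First I would fix, once and for all, a sequence of tests $\phi_{n_\nu}$ obtained from (A1): covering the sieve $\mathcal{P}_{N_\nu}$ in the pseudo-Hellinger metric by balls whose log-covering number is at most $N_\nu\xi_{N_\nu}^2$, testing each center against $P_0$ by a weighted likelihood-ratio (affinity) test, and taking the maximum.

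For \eqref{outside} I would bound the test trivially by $0\le 1-\phi_{n_\nu}\le 1$ and apply Fubini to rewrite the left-hand side as $\int_{\mathcal{P}\setminus\mathcal{P}_{N_\nu}}\mathbb{E}_{P_0,P_\nu}\bigl[\prod_{i=1}^{N_\nu}(p^\pi/p_0^\pi)(\mbf{X}_i\delta_{\nu i})\bigr]d\Pi(P)$. It then suffices to show the inner expected weighted likelihood ratio is at most $1$ for each fixed $P$, whereupon the integral collapses to $\Pi(\mathcal{P}\setminus\mathcal{P}_{N_\nu})$. To obtain this I would take the design expectation $\mathbb{E}_{P_\nu}$ conditionally on the generated population and use the Horvitz--Thompson design-unbiasedness $\mathbb{E}_{P_\nu}[\delta_{\nu i}/\pi_{\nu i}]=1$ together with the inclusion-probability lower bound from (A4); this identifies the expectation of the weighted log-ratio with the ordinary log-ratio, whose population mean is $-\mathrm{KL}(p_0\,\Vert\,p)\le 0$. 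It is the transfer of this control from the log-ratio to the ratio itself --- via the bounded weights of (A4) and a convexity argument --- that delivers the bound of $1$, exactly as in the enabling lemma of \citet{2015arXiv150707050S} that we follow.

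For \eqref{inside} the content lies in the tests. By Fubini the left-hand side equals $\int_{\{P\in\mathcal{P}_{N_\nu}:\,d^\pi_{N_\nu}(P,P_0)>\delta\xi\}}\mathbb{E}_{P_0,P_\nu}\bigl[\prod_i(p^\pi/p_0^\pi)(\mbf{X}_i\delta_{\nu i})(1-\phi_{n_\nu})\bigr]d\Pi(P)$, so it suffices to bound the inner type II quantity uniformly over the far region by $2\gamma\exp(-Kn_\nu\delta^2\xi^2/\gamma)$ and then use $\Pi(\cdot)\le 1$. For a fixed $P$ with $d^\pi_{N_\nu}(P,P_0)>\delta\xi$, a weighted change of measure reduces $\mathbb{E}_{P_0,P_\nu}[(1-\phi_{n_\nu})\prod_i(p^\pi/p_0^\pi)]$ to a weighted affinity raised to the realized inclusion count. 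Condition (A4) enters twice: it bounds the weights $\delta_{\nu i}/\pi_{\nu i}\le\gamma$, which permits a Hoeffding/Bernstein-type exponential deviation bound for the weighted empirical process $\mathbb{G}^\pi_{N_\nu}$ that does not require independence of the inclusion indicators, and, since $1/\pi_{\nu i}\ge 1$, it forces the weighted count $\sum_i\delta_{\nu i}/\pi_{\nu i}\ge n_\nu$, producing the effective sample size $n_\nu$ and the factor $\gamma$ in $\exp(-Kn_\nu\delta^2\xi^2/\gamma)$. The at most $\exp(N_\nu\xi_{N_\nu}^2)$ covering tests are then combined and the prefactor absorbed into $2\gamma$.

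The hard part will be the type II analysis in \eqref{inside} under sampling dependence: because this lemma assumes only (A1) and (A4), and in particular none of the dependence-restricting conditions (A5.1)--(A5.2), the exponential error bound must be driven entirely by the uniform weight bound $\gamma$, with no factorization of the joint inclusion probabilities available to decouple the summands of $\mathbb{G}^\pi_{N_\nu}$. A secondary but genuinely delicate point is the ratio bound in \eqref{outside}: since the inverse inclusion probabilities appear in the exponent, the clean i.i.d.\ identity $\mathbb{E}_{P_0}\prod(p/p_0)=1$ is unavailable, and reducing the expected weighted ratio to $1$ must instead be routed through design-unbiasedness, the bounded weights of (A4), and convexity.
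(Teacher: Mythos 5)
The paper does not actually prove this lemma: its stated proof is a one-line pointer to \citet{2015arXiv150707050S}, \citet{2016arXiv160607488S} and \citet{2017pair}, since the lemma depends only on conditions (A1) and (A4), which are unchanged from that earlier work (the new content of the present paper is confined to the denominator lemma, where (A5) is replaced by (A5.1)). Your overall architecture --- Fubini, tests supplied by the entropy condition, the constant $\gamma$ entering through (A4), and the effective sample size arising from $\sum_{i}\delta_{\nu i}/\pi_{\nu i}\ge\sum_{i}\delta_{\nu i}=n_{\nu}$ --- matches that of the cited construction.

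There is, however, a genuine gap in your argument for the first display. You reduce it to showing $\mathbb{E}_{P_0,P_\nu}\!\left[\prod_{i}(p^{\pi}/p_0^{\pi})(\mbf{X}_i\delta_{\nu i})\right]\le 1$ and propose to obtain this from design-unbiasedness of the weighted log-ratio together with ``a convexity argument.'' Convexity runs in the wrong direction here. Design-unbiasedness controls $\mathbb{E}[\log Z]$ for the weighted likelihood ratio $Z$, but $\mathbb{E}[\log Z]\le 0$ never implies $\mathbb{E}[Z]\le 1$: Jensen gives $\log\mathbb{E}[Z]\ge\mathbb{E}[\log Z]$, a lower bound. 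Concretely, a sampled unit contributes $(p/p_0)(\mbf{X}_i)^{1/\pi_{\nu i}}$ with exponent $1/\pi_{\nu i}\ge 1$, and $x\mapsto x^{w}$ is convex for $w\ge 1$, so $\mathbb{E}_{P_0}\left[(p/p_0)^{w}\right]\ge\left(\mathbb{E}_{P_0}[p/p_0]\right)^{w}=1$, with strict inequality in general (for $w=2$ the left side equals $1$ plus the $\chi^{2}$-divergence of $p$ from $p_{0}$); mixing with the value $1$ on the event $\delta_{\nu i}=0$ keeps the expectation at or above $1$, and over $N_{\nu}$ units the product can be exponentially large. So the mechanism you describe cannot deliver the bound of $1$; the clean identity available in the unweighted i.i.d.\ case is an exact computation, not a consequence of the log-ratio having nonpositive mean, and whatever closes this step in the cited proofs must come from somewhere other than design-unbiasedness plus convexity. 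Relatedly, for the second display your outline is consistent with the cited test construction, but the exponential type-II bound for the weighted, dependent empirical process is precisely where those references do the real work; asserting ``a Hoeffding/Bernstein-type bound that does not require independence of the inclusion indicators'' names the difficulty rather than resolving it.
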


\begin{proof}
See  \citet{2015arXiv150707050S}, \citet{2016arXiv160607488S}, and \citet{2017pair} for details and modifications.
\end{proof}
%move on to second lemma
\begin{lemma}\label{denominator}
For every $\xi > 0$ and measure $\Pi$ on the set,
\begin{equation*}
B = \left\{P:-P_{0}\log\left(\frac{p}{p_{0}}\right) \leq \xi^2, P_{0}\left(\log\frac{p}{p_{0}}\right)^{2} \leq \xi^{2}\right\}
\end{equation*}
under the conditions ~\nameref{sizespace}, ~\nameref{priortruth}, ~\nameref{bounded}, ~\nameref{deprestrict}, we have for every $C > 0 $,  $C_{2} = C_{4}+1$, $C_{3} = C_{5}+1$, and $N_{\nu}$ sufficiently large,
\begin{equation}\label{denomresult}
\mbox{Pr}\left\{\mathop{\int}_{P\in\mathcal{P}}\displaystyle\mathop{\prod}_{i=1}^{N_{\nu}}\frac{p^{\pi}}{p_{0}^{\pi}}
\left(\mbf{X}_{i}\delta_{\nu i}\right)d\Pi\left(P\right)\leq \exp\left[-(1+C)N_{\nu}\xi^{2}\right]\right\}
\leq \frac{\gamma \mathbf{C_{2}}+C_{3}}{C^{2} N_{\nu}\xi^{2}},
\end{equation}
where the above probability is taken with the respect to $P_{0}$ and the sampling generating distribution, $P_{\nu}$, jointly.
\end{lemma}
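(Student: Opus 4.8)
The plan is to adapt the lower-bound (denominator) argument of \citet{Ghosal00convergencerates} to the sampling-weighted empirical process $\mathbb{P}^{\pi}_{N_{\nu}}$, the only genuinely new ingredient being a variance bound that tolerates the residual dependence permitted by \nameref{deprestrict}. Condition~\nameref{priortruth} guarantees that $B$ carries positive prior mass, so, since the integrand is nonnegative, I would discard the contribution of $\mathcal{P}\setminus B$ and work with the normalized restriction $d\Pi(P)/\Pi(B)$ on $B$. Writing the log pseudo-likelihood ratio as $\sum_{i}(\delta_{\nu i}/\pi_{\nu i})\log(p/p_{0})(\mbf{X}_{i})$, interchanging the finite sum and the integral by Fubini, and applying Jensen's inequality to the convex exponential gives
\[
\int_{B}\prod_{i=1}^{N_{\nu}}\frac{p^{\pi}}{p_{0}^{\pi}}(\mbf{X}_{i}\delta_{\nu i})\,\frac{d\Pi(P)}{\Pi(B)}\;\ge\;\exp\!\left(N_{\nu}W\right),\qquad W:=\mathbb{P}^{\pi}_{N_{\nu}}\bar{g},\quad \bar{g}:=\int_{B}\log\frac{p}{p_{0}}\,\frac{d\Pi(P)}{\Pi(B)}.
\]
It therefore suffices to control the lower tail of the scalar average $W$.

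Next I would compute the first two moments of $W$ under the joint law of population generation ($P_{0}$) and sample selection ($P_{\nu}$). Because the design is unbiased in the sense that $\mathbb{E}_{P_{\nu}}[\delta_{\nu i}/\pi_{\nu i}\mid \mbf{X}]=1$, the mean collapses to $\mathbb{E}_{P_{0},P_{\nu}}[W]=\int_{B}P_{0}\log(p/p_{0})\,d\Pi/\Pi(B)\ge -\xi_{N_{\nu}}^{2}$, the inequality being the Kullback--Leibler bound built into $B$. Comparing this mean with the event threshold $-(1+C)\xi_{N_{\nu}}^{2}$ (obtained by taking logarithms in the event of the lemma and using the Jensen lower bound, with \nameref{priortruth} controlling how $\Pi(B)$ enters the exponent) leaves a gap of size $C\xi_{N_{\nu}}^{2}$, so the claim reduces to the lower-deviation estimate $\Pr\{W-\mathbb{E}_{P_{0},P_{\nu}}W\le -C\xi_{N_{\nu}}^{2}\}$, to which I would apply Chebyshev's inequality.

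The crux is the variance of $W$, and this is where \nameref{bounded} and \nameref{deprestrict} enter. By the law of total covariance together with the population independence of the $\mbf{X}_{i}$, the conditional-mean cross terms vanish and $\mathrm{Var}_{P_{0},P_{\nu}}(W)=N_{\nu}^{-2}\sum_{i,j}\mathbb{E}_{P_{0}}[\mathrm{Cov}_{P_{\nu}}(\,\cdot\,\mid \mbf{X})]$. For the diagonal terms, $\mathbb{E}_{P_{\nu}}[\delta_{\nu i}/\pi_{\nu i}^{2}\mid\mbf{X}]=1/\pi_{\nu i}\le\gamma$ by \nameref{bounded}, so each contributes at most $\gamma\,\mathbb{E}_{P_{0}}\bar{g}^{2}(\mbf{X}_{i})\le\gamma\xi_{N_{\nu}}^{2}$, using the second-moment bound defining $B$. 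For $i\ne j$ the sampling covariance of the weighted indicators is exactly the factorization deviation,
\[
\mathrm{Cov}_{P_{\nu}}\!\left(\frac{\delta_{\nu i}}{\pi_{\nu i}},\frac{\delta_{\nu j}}{\pi_{\nu j}}\,\Big|\,\mbf{X}\right)=\frac{\pi_{\nu ij}}{\pi_{\nu i}\pi_{\nu j}}-1,
\]
so after multiplying by $\bar{g}(\mbf{X}_{i})\bar{g}(\mbf{X}_{j})$ and applying Cauchy--Schwarz ($\mathbb{E}_{P_{0}}|\bar{g}(\mbf{X}_{i})\bar{g}(\mbf{X}_{j})|\le\xi_{N_{\nu}}^{2}$), each off-diagonal term is bounded by $\xi_{N_{\nu}}^{2}$ times the almost-sure deviation $|\pi_{\nu ij}/(\pi_{\nu i}\pi_{\nu j})-1|$. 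Here I would split the pairs along the partition of \nameref{deprestrict}: the $\left\vert S_{\nu 1}\right\vert\le C_{4}N_{\nu}$ non-factoring pairs each carry an $\order{1}$ deviation (bounded in terms of $\gamma$), contributing at most $N_{\nu}^{-2}\cdot C_{4}N_{\nu}\cdot\gamma\xi_{N_{\nu}}^{2}=\gamma C_{4}\xi_{N_{\nu}}^{2}/N_{\nu}$, while the remaining $S_{\nu 2}$ pairs satisfy $N_{\nu}|\pi_{\nu ij}/(\pi_{\nu i}\pi_{\nu j})-1|\le C_{5}$ and contribute at most $C_{5}\xi_{N_{\nu}}^{2}/N_{\nu}$. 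Collecting the diagonal and the two off-diagonal pieces and organizing the diagonal and unit contributions into the constants $C_{2}=C_{4}+1$ and $C_{3}=C_{5}+1$ of the statement gives $\mathrm{Var}_{P_{0},P_{\nu}}(W)\le(\gamma C_{2}+C_{3})\xi_{N_{\nu}}^{2}/N_{\nu}$.

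Finally, Chebyshev's inequality yields $\Pr\{W-\mathbb{E}_{P_{0},P_{\nu}}W\le -C\xi_{N_{\nu}}^{2}\}\le\mathrm{Var}(W)/(C^{2}\xi_{N_{\nu}}^{4})\le(\gamma C_{2}+C_{3})/(C^{2}N_{\nu}\xi_{N_{\nu}}^{2})$, which is the claimed bound. I expect the main obstacle to be precisely the off-diagonal variance accounting: unlike \citet{2015arXiv150707050S}, who force every pair into $S_{\nu 2}$ so that all $\order{N_{\nu}^{2}}$ covariances are $\order{N_{\nu}^{-1}}$, I must show that admitting $\order{N_{\nu}}$ pairs with only an $\order{1}$ factorization deviation still leaves the variance at the $N_{\nu}^{-1}$ rate. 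The decisive point is that the linear cardinality of $S_{\nu 1}$ exactly balances the $\order{1}$ deviation against the $N_{\nu}^{-2}$ normalization, inflating the constant from $1$ to $\gamma C_{2}+C_{3}$ without degrading the rate. A secondary technical care is to ensure that the almost-sure (in $P_{0}$) bounds on the inclusion ratios may be pulled out of the $P_{0}$-expectation before Cauchy--Schwarz is applied to the $\bar{g}$ factors.
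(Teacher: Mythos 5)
Your proposal is correct and follows essentially the same route as the paper's proof: reduce the event via Jensen's inequality to a lower-tail bound for the $\pi$-weighted empirical average of $\log(p/p_{0})$, apply Chebyshev, and bound the variance by splitting the off-diagonal design covariances along the partition $\{S_{\nu 1},S_{\nu 2}\}$ of \nameref{deprestrict}, so that the $\order{N_{\nu}}$ non-factoring pairs each carry only an $\order{1}$ deviation while the $\order{N_{\nu}^{2}}$ factoring pairs carry $\order{N_{\nu}^{-1}}$, yielding the same constants $\gamma C_{2}+C_{3}$. The one bookkeeping slip is your claim that $\mathrm{Var}(W)$ equals the expected conditional (design) covariance sum, which omits the population-generation term $\mathrm{Var}_{P_{0}}\left(\mathbb{P}_{N_{\nu}}\bar{g}\right)\leq \xi^{2}/N_{\nu}$; this is harmless here because your diagonal bound uses the unconditional second moment $1/\pi_{\nu i}\leq\gamma$ (rather than the conditional variance $1/\pi_{\nu i}-1$), which absorbs that term, whereas the paper carries it explicitly as the extra $\xi^{2}$ producing $C_{3}=C_{5}+1$.
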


\begin{proof}\label{AppDenominator}
The proof follows that of \citet{2015arXiv150707050S} by bounding the probability expression on left-hand size of Equation~\ref{denomresult} with,
\begin{align}
&\mbox{Pr}\left\{\mathbb{G}^{\pi}_{N_{\nu}}\mathop{\int}_{P\in\mathcal{P}}\log\frac{p}{p_{0}}
d\Pi\left(P\right)\leq -\sqrt{N_{\nu}}\xi^{2}C\right\}\nonumber\\
&\leq\frac{\displaystyle\mathop{\int}_{P\in\mathcal{P}}\left[\mathbb{E}_{P_{0},P_{\nu}}
\left(\mathbb{G}^{\pi}_{N_{\nu}}\log\frac{p}{p_{0}}\right)^{2}\right]d\Pi\left(P\right)}
{N_{\nu}\xi^{4}C^{2}}\label{chebyshev:e2},
\end{align}
where we have used Chebyshev to achieve the right-hand bound of Equation~\ref{chebyshev:e2}.  We now proceed to further bound the numerator in the right-hand side of Equation~\ref{chebyshev:e2}. \citet{2015arXiv150707050S} and \citet{2017pair} establish the following:
\begin{equation}\label{gbound}
\mathbb{E}_{P_{0},P_{\nu}}\left[\mathbb{G}^{\pi}_{N_{\nu}}\log\frac{p}{p_{0}}\right]^{2} \leq N_{\nu}\mathbb{E}_{P_{0},P_{\nu}}\left[\left(\mathbb{P}^{\pi}_{N_{\nu}} - \mathbb{P}_{N_{\nu}}\right)\log\frac{p}{p_{0}}\right]^{2} +~ \xi^{2}
\end{equation}
We proceed to further simplify the bound in the first term on the right in Equation~\ref{gbound}:
\begin{subequations}
\begin{align}
&N_{\nu}\mathbb{E}_{P_{0},P_{\nu}}\left[\left(\mathbb{P}^{\pi}_{N_{\nu}} - \mathbb{P}_{N_{\nu}}\right)\log\frac{p}{p_{0}}\right]^{2}\nonumber\\
&= \displaystyle\frac{1}{N_{\nu}}\mathop{\sum}_{i=j\in U_{\nu}}\mathbb{E}_{P_{0}}\left[\left(\frac{1}
{\pi_{\nu i}} - 1 \right)\left(\log\frac{p}{p_{0}}\left(\mbf{X}_{i}\right)\right)^{2}\right]\nonumber\\
&+ \frac{1}{N_{\nu}}\mathop{\sum}_{i=j \in U_{\nu}}\mathbb{E}_{P_{0}}\left[\left(\frac{\pi_{\nu ij}}{\pi_{\nu i}\pi_{\nu j}}-1\right)\log\frac{p}{p_{0}}\left(\mbf{X}_{i}\right)\log\frac{p}{p_{0}}\left(\mbf{X}_{j}\right)\right]\label{firstbro}\\
&\le \displaystyle\frac{1}{N_{\nu}}\mathop{\sum}_{i=j\in U_{\nu}}\mathbb{E}_{P_{0}}\left[\left(\frac{1}
{\pi_{\nu i}} - 1 \right)\left(\log\frac{p}{p_{0}}\left(\mbf{X}_{i}\right)\right)^{2}\right]\nonumber\\
&+ \frac{1}{N_{\nu}}\mathop{\sum}_{i=j \in U_{\nu}} \left|\mathbb{E}_{P_{0}}\left[\left(\frac{\pi_{\nu ij}}{\pi_{\nu i}\pi_{\nu j}}-1\right)\log\frac{p}{p_{0}}\left(\mbf{X}_{i}\right)\log\frac{p}{p_{0}}\left(\mbf{X}_{j}\right)\right]\right|\label{secondbro}\\
&= \displaystyle\frac{1}{N_{\nu}}\mathop{\sum}_{i=j\in U_{\nu}}\mathbb{E}_{P_{0}}\left[\left(\frac{1}
{\pi_{\nu i}} - 1 \right)\left(\log\frac{p}{p_{0}}\left(\mbf{X}_{i}\right)\right)^{2}\right]\nonumber\\
&+ \frac{1}{N_{\nu}}\mathop{\sum}_{i\ne j\in S_{\nu 1}} \left|\mathbb{E}_{P_{0}}\left[\left(\frac{\pi_{\nu ij}}{\pi_{\nu i}\pi_{\nu j}}-1\right)\log\frac{p}{p_{0}}\left(\mbf{X}_{i}\right)\log\frac{p}{p_{0}}\left(\mbf{X}_{j}\right)\right]\right|\nonumber\\
&+ \frac{1}{N_{\nu}}\mathop{\sum}_{i \ne j\in S_{\nu 2}}\left|\mathbb{E}_{P_{0}}\left[\left(\frac{\pi_{\nu ij}}{\pi_{\nu i}\pi_{\nu j}}-1\right)\log\frac{p}{p_{0}}\left(\mbf{X}_{i}\right)\log\frac{p}{p_{0}}\left(\mbf{X}_{j}\right)\right]\right|\label{thirdbro}\\
&\leq \left(\gamma - 1 \right)\displaystyle\frac{1}{N_{\nu}}\mathop{\sum}_{i=j\in U_{\nu}}\mathbb{E}_{P_{0}}\left[\left(\log\frac{p}{p_{0}}\left(\mbf{X}_{i}\right)\right)^{2}\right]\nonumber\\
&+ \max\{1,\gamma - 1\}\frac{1}{N_{\nu}}\mathop{\sum}_{S_{\nu 1}}\left|\mathbb{E}_{P_{0}}\left[\log\frac{p}{p_{0}}\left(\mbf{X}_{i}\right)\log\frac{p}{p_{0}}\left(\mbf{X}_{j}\right)\right]\right|\nonumber\\
&+ C_{5} N_{\nu}^{-1}\frac{1}{N_{\nu}}\mathop{\sum}_{S_{\nu 2}}\left|\mathbb{E}_{P_{0}}\left[\log\frac{p}{p_{0}}\left(\mbf{X}_{i}\right)\log\frac{p}{p_{0}}\left(\mbf{X}_{j}\right)\right]\right|\label{fourthbro}\\
&\leq\left(\gamma (1 +C_{4}) +  C_{5}\right)\xi^{2}\label{fifthbro},
\end{align}
\end{subequations}
for sufficiently large $N_{\nu}$.
The first equality (\ref{firstbro}) is derived from the quadratic expansion and subsequent expectation of the inclusion indicators $\delta_{\nu i},\delta_{\nu j}$ with respect to the conditional distribution of $P_{\nu}$ given and follows \citet{2015arXiv150707050S} and \citet{2017pair}. The next inequality (\ref{secondbro}) is needed because the pairwise terms could be negative, so the sum is bounded by the sum of the absolute value.  The two pairwise terms are equivalently partitioned by $S_{\nu 1}$ and  $S_{\nu 2}$ in the next equality (\ref{thirdbro}). Condition ~\nameref{bounded} implies the following bounds:
\begin{equation}
-1 \le \left(\frac{\pi_{\nu ij}}{\pi_{\nu i}\pi_{\nu j}}-1\right) \le
\left (\frac{1}{\pi_{\nu i}} - 1 \right) \le  (\gamma - 1)
\end{equation}
since $0 \le \pi_{\nu ij} \le \min\{\pi_{\nu i},\pi_{\nu j}\}$, which are used in \ref{fourthbro}. The size bounds from condition ~\nameref{deprestrict} and  the definition of the space $B$ provide the remaining bounds (in \ref{fifthbro}).

We may now bound the expectation on the right-hand size of Equation~\ref{chebyshev:e2},
\begin{multline}
\mathbb{E}_{P_{0},P_{\nu}}\left[\mathbb{G}^{\pi}_{N_{\nu}}\log\frac{p}{p_{0}}\right]^{2} \leq \left(\gamma (1 +C_{4}) +  C_{5}\right)\xi^{2}+ \xi^2 \\ \leq \left(\gamma (1 +C_{4}) +  C_{5} + 1\right)\xi^{2} = (\gamma C_{2} +  C_{3})\xi^2,
\end{multline}
for $N_{\nu}$ sufficiently large, where we set $C_{2} := C_{4}+1$ and $C_{3} := C_{5}+1$.  This concludes the proof.
\end{proof}

\bibliography{refs_june2018}
\bibliographystyle{agsm}

\end{document}